\newtheorem{prop}{Proposition}
\title{On the structure of Bethe vectors}
\author[1,2]{J. Fuksa \thanks{\quad fuksajan@fjfi.cvut.cz,  fuksa@theor.jinr.ru}}
\affil[1]{{\small{Department of Mathematics, FNSPE, CTU in Prague, Czech republic}}}
\affil[2]{{\small{Bogoliubov Laboratory of Theoretical Physics, JINR, Dubna, Russia}}}
\begin{document}

\maketitle

\begin{abstract}
The structure of Bethe vectors for generalised models associated with the XXX- and XXZ-type R-matrix is investigated.  The Bethe vectors in terms of two--component and multi--component models are described. Consequently, their structure in terms of local variables and operators is provided. This, as a consequence, proves the equivalence of coordinate and algebraic Bethe ansatzes for the Heisenberg XXX and XXZ spin chains.
\end{abstract}

\section{Introduction}

The quantum inverse scattering method (QISM) was formulated by Faddeev, Sklya\-nin, and Takhtadjan in \citep{TF79,FST80}.  Many physically interesting models were solved by this method such as the one-dimensional Bose gas, the Heisenberg spin chains, the sine-Gordon model \cite{TF79,FST80,Skl79,SF78,IK86}, etc.

To fix the notation the basic features of the QISM are described in section \ref{sec:QISM}. In particular, we recall the notion of the Bethe vectors which are eigenvectors of a family of mutually commuting operators including Hamiltonian of the solved system.  We distinguish between the Bethe vectors, which are eigenvectors of the Hamiltonian and thus their spectral parameters are the solution of the Bethe equations, and the formal Bethe vectors, which have the same formal structure but their spectral parameters may or may not solve the Bethe equations. 

The aim of this article is to describe the structure of the formal Bethe vectors.  For this purpose we describe the generalized two--component and multi--component models introduced in \cite{IK84} and \cite{IzKorRe87}, respectively. The results are formulated as three propositions contained in section \ref{sec:BetheVectors}.

In proposition \ref{prop:2comp} the formal Bethe vectors in terms of the two--component model are constructed. They were obtained in \cite{IK84} for models with the XXX-type R-matrix, including the XXX spin chains and the one--dimensional Bose gas. After minor modifications these vectors are also valid for the models with the XXZ-type R-matrix, i.e., also for the XXZ spin chains, sine--Gordon model, and other models. For its proof see \cite{Slav07}.

Proposition \ref{prop:Kcomp} describes the formal Bethe vectors in terms of the multi--component model. The form of these vectors was firstly published in \cite{IzKorRe87} (see also \cite{KBI93}) for models with the XXX-type R-matrix. It is valid for models associated with the XXZ-type R-matrix as well. We formulate proposition \ref{prop:Kcomp} for the generalised models associated with the R-matrix of both the XXX- and XXZ-type and provide its proof because, as we believe, it is missing in the literature.

Proposition \ref{prop:DetailBV} describes the local structure of the formal Bethe vectors for the generalised inhomogeneous models with the R-matrices of both the XXX- and XXZ-type. We give also its proof. A particular form of proposition \ref{prop:DetailBV} for homogeneous XXX spin chain was obtained in \cite{IzKorRe87}; the proof for its inhomogeneous version can be found in \cite{EFGKK05}.

\section{Quantum inverse scattering method} 
\label{sec:QISM}
\setcounter{equation}{0}

We adopt the notation often used in the context of the QISM. Let $A$ be an operator acting in a vector space $V$, and let $B$ be an operator acting in a tensor product $V\otimes V$. Let us have a tensor product $V^{\otimes N}=V\otimes V\otimes \dotsm \otimes V$ of $N$ spaces $V$. We use the notation $A_i$ for the operator acting as $A$ in the $i$-th copy of $V$ and trivially in the rest of the tensor product, and $B_{kl}$ for the operator acting as $B$ in the tensor product of the $k$-th and $l$-th vector space and trivially in the rest of the tensor product.

We briefly introduce in this section the basic notions of the the QISM. For more details see \cite{Fad96} or \cite{KBI93} and references therein. 

Let us suppose that there is a chain of $N$ sites, each site being endowed with a local Hilbert space $h_j$ and special parameter $\xi_j$. Such a chain is called inhomogeneous. If all parameters $\xi_j$ are equal, it is called homogeneous. The total Hilbert space is $\mathscr{H}=h_1\otimes h_2\otimes\dots\otimes h_N$. To the $j$-th site there corresponds an L-operator $L_j(\mu,\xi_j)=L_j(\mu-\xi_j)$. The L-operator is  a matrix acting in an auxiliary space $V$ and the local Hilbert space $h_j$ of the $j$-th site (quantum space); it depends on two spectral parameters $\mu$ and $\xi_j$.  Depending on a physical model the auxiliary and quantum space vary. The explicit form of the L-operator for particular models can be found in the literature, e.g., \cite{Fad96,KBI93}. The monodromy matrix $T(\lambda)$ is a product of the L-operators along the chain and acts, therefore, in $V\otimes \mathscr{H}$. We restrict ourselves to particular models with an R-matrix of the form \eqref{Rmatrix} below. For such models, the monodromy matrix is of the form
\begin{equation} \label{monodromy}
T(\lambda) = \prod_{j=1}^N L_j(\lambda,\xi_j) = \prod_{j=1}^N \begin{pmatrix}
(L_j(\lambda,\xi_j))_{11} & (L_j(\lambda,\xi_j))_{12} \\
(L_j(\lambda,\xi_j))_{21} & (L_j(\lambda,\xi_j))_{22}
\end{pmatrix} = \begin{pmatrix}
A(\lambda) & B(\lambda) \\
C(\lambda) & D(\lambda)
\end{pmatrix}.
\end{equation}
Its matrix elements $A(\lambda), B(\lambda),C(\lambda), D(\lambda)$ are the operators on $\mathscr{H}$.  Although we do not indicate it explicitly in our notation, the monodromy matrix $T(\lambda)$ as well as its matrix elements depends on the inhomogeneity parameters $\xi_1,\dots,\xi_N$.

The monodromy matrix $T(\lambda)$ satisfies the following bilinear equation (called RTT-relation) in $V_1\otimes V_2 \otimes \mathscr{H}$:
\begin{equation} \label{RTT}
R_{12}(\lambda,\mu) T_1(\lambda) T_2 (\mu) = T_2 (\mu) T_1(\lambda) R_{12}(\lambda,\mu).
\end{equation}
The matrix $R_{12}(\lambda,\mu)$ is called the R-matrix. We suppose here that is of the form 
\begin{equation} \label{Rmatrix}
R(\lambda,\mu)=\begin{pmatrix}
f(\lambda,\mu) & 0 & 0 & 0 \\
0 & 1 & g(\lambda,\mu) & 0 \\
0 & g(\lambda,\mu) & 1 & 0 \\
0 & 0 & 0 & f(\lambda,\mu)
\end{pmatrix}.
\end{equation}
There are two R-matrices of the form \eqref{Rmatrix}. The first one is rational with the matrix elements  
\begin{equation} \label{Rxxx}
f(\lambda,\mu)= \frac{\lambda-\mu+1}{\lambda-\mu}, \qquad g(\lambda,\mu)= \frac{1}{\lambda-\mu}
\end{equation}
and corresponds, e.g., to the XXX spin chain and the one-dimensional Bose gas. The second one is trigonometric with the matrix elements
\begin{equation} \label{Rxxz}
f(\lambda,\mu)= \frac{\sinh(\lambda-\mu+\eta)}{\sinh(\lambda-\mu)}, \qquad g(\lambda,\mu)= \frac{\sinh \eta}{\sinh(\lambda-\mu)}
\end{equation}
and corresponds, e.g., to the XXZ spin chain and the sine--Gordon model.

The R-matrix \eqref{Rmatrix} satisfies the famous Yang--Baxter equation in the tensor product of three auxiliary spaces $V_1\otimes V_2 \otimes V_3$
\begin{equation}\label{YBequation}
R_{12}(\lambda_1 , \lambda_2) R_{13}(\lambda_1,\lambda_3) R_{23}(\lambda_2,\lambda_3) = R_{23}(\lambda_2,\lambda_3) R_{13}(\lambda_1,\lambda_3) R_{12}(\lambda_1,\lambda_2).
\end{equation}

Let us mention that the L-operator $L(\lambda,\xi)$ satisfies the same relation \eqref{RTT}, as can be easily seen if we restrict ourselves in \eqref{monodromy} to the chain of the length one. In fact, relation \eqref{RTT} for the monodromy matrix is a consequence of the same relation for the L-operators, see \cite{Fad96}.

Relation \eqref{RTT} determines an algebra with bilinear commutation relations. The R-matrix \eqref{Rmatrix} is an analogue of its structure constants and the Yang--Baxter equation \eqref{YBequation} is an analogue of the Jacobi identity. Let us list some of the relations here:
\begin{align}
[T_{jk}(\lambda),T_{jk}(\mu)] & = 0,\qquad j,k=1,2, \label{TTcomm}\\
A(\mu) B(\lambda) & = f(\lambda,\mu) B(\lambda) A(\mu) + g(\mu,\lambda) B(\mu) A(\lambda),\label{ABcomm}\\
B(\mu) A(\lambda) & = f(\lambda,\mu) A(\lambda) B(\mu) + g(\mu,\lambda) A(\mu) B(\lambda), \label{BAcomm}\\
D(\mu) B(\lambda) & = f(\mu,\lambda) B(\lambda) D(\mu) + g(\lambda,\mu) B(\mu) D(\lambda), \label{DBcomm}\\
& \text{and another seven relations} \label{Restcomm}
\end{align}
which can be found, e.g., in review papers \cite{Fad96,Slav07}. 

Relation \eqref{RTT} is the starting point of the quantum inverse scattering method. What  follows is a general procedure which is related not only to the R-matrix of the form \eqref{Rmatrix}.

The R-matrix \eqref{Rmatrix} is invertible almost everywhere. From this fact one can  immediately prove the commutation relation
\begin{equation} \label{CommutationTransfer}
[\mathscr{T}(\lambda),\mathscr{T}(\mu)]=0
\end{equation}
for the transfer matrix
\begin{equation} \label{TransferMatrix}
\mathscr{T}(\lambda) \equiv \mathrm{Tr} (T(\lambda)) =A(\lambda) + D(\lambda).
\end{equation}
Indeed, we can write \eqref{RTT} as $R_{12}(\lambda,\mu) T_1(\lambda) T_2 (\mu) R^{-1}_{12}(\lambda,\mu) = T_2 (\mu) T_1(\lambda) $ and taking traces in the spaces $V_1$ and $V_2$ we get  \eqref{CommutationTransfer}.

The transfer matrix \eqref{TransferMatrix} constitutes a generating function for a class of $N-1$ commuting operators. The QISM diagonalizes all these commuting operators simultaneously, as it diagonalizes their generating function $\mathscr{T}(\lambda)$. The eigenvectors resulting from this diagonalisation are called the Bethe vectors.  We mention here only the result. For details see \cite{Fad96} (see also \cite{BFIKN15}). For diagonalisation of $\mathscr{T}(\lambda)$, it is necessary to suppose that the Hilbert space $\mathscr{H}$  has the structure of the Fock space with the cyclic vector $\ket{0}$ called pseudovacuum. Let the pseudovacuum $\ket{0}$ be an eigenstate of both the operators $A(\lambda)$ and $D(\lambda)$ and be annihilated by the operator $C(\lambda)$:
\begin{equation} \label{assumptions1}
A(\lambda) \ket{0}= a(\lambda)\ket{0},\qquad D(\lambda)\ket{0}=d(\lambda)\ket{0},\qquad C(\lambda)\ket{0} = 0. 
\end{equation} 
Such a model with unspecified functions $a(\lambda)$ and $d(\lambda)$ is called generalized.

The eigenvectors of the transfer matrix $\mathscr{T}(\mu)$ are of the form
\begin{equation} \label{BetheVec}
\ket{\{\lambda\}} = \prod_{j=1}^M B(\lambda_j) \ket{0},
\end{equation}
if the parameters $\lambda_k$ are pairwise distinct: $\lambda_j\neq \lambda_k$ for $j\neq k$ and satisfy the Bethe equations
\begin{equation} \label{BetheEq}
\mathscr{Y}(\lambda_k|\{\lambda\})=0
\end{equation} 
for all $k=1,\dots,M$. Here, $M$ is a number of excitations, $0\leq M\leq N$. The corresponding eigenvalue is
\begin{equation}
\tau(\mu|\{\lambda\}) = a(\mu) \prod_{a=1}^M f(\lambda_a,\mu) + d(\mu) \prod_{a=1}^M f(\mu,\lambda_a)
\end{equation}
and the function $\mathscr{Y}(\mu|\{\lambda\})$ appearing in the Bethe equation is of the form:
\begin{equation}
\mathscr{Y}(\mu|\{\lambda\}) = \tau(\mu|\{\lambda\}) \prod_{a=1}^M g^{-1}(\lambda_a,\mu)
\end{equation}
where the functions $f(\lambda,\mu)$ and $g(\lambda,\mu)$ are the matrix elements of the R-matrix \eqref{Rmatrix}. The vectors \eqref{BetheVec} are called the Bethe vectors. 

We distinguish here the Bethe vectors and formal Bethe vectors: the former are eigenvectors of the transfer matrix, whereas the latter have the structure of \eqref{BetheVec} but the spectral parameters of their creation-like operators $B(\lambda_j)$ are arbitrary, i.e., they may or may not be eigenvectors of the transfer matrix $\mathscr{T}(\mu)$. We will deal with the formal Bethe vectors for the rest of the text.

\section{Structure of formal Bethe vectors} 
\label{sec:BetheVectors}
\setcounter{equation}{0}

The aim of this section is to describe the detailed structure of the formal Bethe vectors, i.e., of vectors of the form \eqref{BetheVec}. For models associated with the XXX-type R-matrix, a lot was known from the origins of the QISM \cite{IK84,IzKorRe87}. It helped with the identification of the QISM with the former coordinate Bethe ansatz method \cite{Be31} for the homogeneous XXX spin chains. We intend to generalize known results and to provide the structure of Bethe vectors for generalised inhomogeneous models associated with the R-matrix of the XXX- and the XXZ-type. We suppose, as in section \ref{sec:QISM}, that the length of the chain is $N$.

\subsection{Multi--component model and formal Bethe vectors}
\label{ssec:ManyComp}

The two--component model was developed for calculations of correlation functions of a local operator sitting at the site  $x$. The chain is split into two subchains of the length $x$ and $N-x$ and we define correspondingly the monodromy matrix for each subchain. For details see \citep{IK84}. This can be obviously generalized to an arbitrary number $K\leq N$ of subchains \cite{IzKorRe87}:
\begin{equation}
T(\lambda) = T(\lambda|1)T(\lambda|2)\cdots T(\lambda|K).
\end{equation}
The total Hilbert space $\mathscr{H}$ is divided into its $K$ subspaces $\mathscr{H}=\mathscr{H}_1\otimes \mathscr{H}_2\otimes \cdots\otimes \mathscr{H}_K$. Consequently, the pseudovacuum is split into $\ket{0}=\ket{0}_1\otimes \ket{0}_2\otimes\cdots\otimes \ket{0}_K$, $\ket{0}_j\in \mathscr{H}_j$, where $j=1,2,\dots,K$. The partial monodromy matrix
\begin{equation}
T(\lambda|j)=\begin{pmatrix}
A_j(\lambda) & B_j(\lambda) \\
C_j(\lambda) & D_j(\lambda)
\end{pmatrix},
\end{equation}
$j=1,2,\dots,K$, satisfies the same RTT-condition \eqref{RTT} as the undivided monodromy matrix $T(\lambda)$ with the same R-matrix \eqref{Rmatrix}. Its matrix elements act non-trivially only on the Hilbert subspace $\mathscr{H}_j$ and due to \eqref{RTT} satisfy the same set of bilinear relations \eqref{TTcomm}--\eqref{Restcomm} as the matrix elements of the full model. The matrix elements of different partial monodromy matrices $T(\lambda|k)$ and $T(\lambda|j)$ mutually commute. 

We suppose that the operators $A_j(\lambda),D_j(\lambda), C_j(\lambda)$ for $j=1,\dots,K$ act on partial pseudovacuum $\ket{0}_j \in \mathscr{H}_j$ as
\begin{align} \label{PartOperators}
 A_j(\lambda)\ket{0}_j =a_j(\lambda) \ket{0}_j, \quad D_j(\lambda)\ket{0}_j=d_j(\lambda) \ket{0}_j,\quad  C_j(\lambda)\ket{0}_j=0.
\end{align} 
The functions $a_j(\lambda),d_j(\lambda)$ are arbitrary; therefore, we are dealing with the generalised model. The eigenfunctions of $A(\lambda)$ and $D(\lambda)$ are then products of the eigenfunctions of the components
\begin{equation} \label{PartEigenval}
a(\lambda) = \prod_{j=1}^K a_j(\lambda), \quad d(\lambda) = \prod_{j=1}^K d_j(\lambda).
\end{equation}

The authors of \cite{IK84,IzKorRe87}, state that the formal Bethe vectors \eqref{BetheVec} of the full model can be expressed in terms of the formal Bethe vectors of its components for models associated with the XXX-type R-matrix. After minor modifications this is also valid for models with the XXZ-type R-matrix as well. We formulate below two propositions about the formal Bethe vectors for the two--component and the multi--component generalised models. 

\begin{prop} \label{prop:2comp}
Let us have a system with $K=2$ components. Let $\ket{\{\lambda \}}$ be the formal Bethe vector \eqref{BetheVec} of the system. It can be expressed in terms of two components as:
\begin{align}\label{2comp}
	& \ket{\{\lambda \}} = 
	 \sum_{\mathcal{J}\in\mathcal{P}(M)} \prod_{k_1\in \mathcal{J}} \prod_{k_2\in \bar{\mathcal{J}}} f(\lambda_{k_1},\lambda_{k_2}) \, d_2(\lambda_{k_1}) \, a_1(\lambda_{k_2}) \; B_1(\lambda_{k_1}) B_2(\lambda_{k_2})  \ket{0}
\end{align}
where $f(\lambda,\mu)$ is the matrix element of the R-matrix \eqref{Rmatrix} and the summation is performed over all sets $\mathcal{J}$ from the power set $\mathcal{P}(M)$ of the set $\{1,\dots,M\}$. $\bar{\mathcal{J}}$ is the complement of $\mathcal{J}$ in $\{1,\dots,M\}$:  $\mathcal{J} \cup  \bar{\mathcal{J}} =\{1,\dots,M\}$, $\mathcal{J} \cap  \bar{\mathcal{J}} =\emptyset$.
\end{prop}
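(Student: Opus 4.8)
The plan is to prove the identity by induction on $M$, the number of $B$-operators. Since $T(\lambda) = T(\lambda|1)T(\lambda|2)$ acts in the same auxiliary space, the off-diagonal element factorizes as $B(\lambda) = A_1(\lambda)B_2(\lambda) + B_1(\lambda)D_2(\lambda)$. For $M=1$ this immediately gives
\begin{equation*}
B(\lambda_1)\ket{0} = A_1(\lambda_1)B_2(\lambda_1)\ket{0}_1\otimes\ket{0}_2 + B_1(\lambda_1)\ket{0}_1\otimes D_2(\lambda_1)\ket{0}_2 = a_1(\lambda_1)B_2(\lambda_1)\ket{0} + d_2(\lambda_1)B_1(\lambda_1)\ket{0},
\end{equation*}
which is exactly \eqref{2comp} for $M=1$: the two terms correspond to $\mathcal{J}=\emptyset$ and $\mathcal{J}=\{1\}$, with the empty product $f$-factor equal to $1$.

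For the inductive step I would act with $B(\lambda_{M+1}) = A_1(\lambda_{M+1})B_2(\lambda_{M+1}) + B_1(\lambda_{M+1})D_2(\lambda_{M+1})$ on the right-hand side of \eqref{2comp} for $M$ and push $A_1(\lambda_{M+1})$ to the right through the string $\prod_{k_1\in\mathcal{J}}B_1(\lambda_{k_1})$ using \eqref{ABcomm} (for the component-$1$ operators), and $D_2(\lambda_{M+1})$ to the right through $\prod_{k_2\in\bar{\mathcal{J}}}B_2(\lambda_{k_2})$ using \eqref{DBcomm}. The ``wanted'' term in each commutation — the one where only the $f$-coefficient is picked up at every step — will produce the $\mathcal{J}\ni M+1$ (resp. $\mathcal{J}\not\ni M+1$) contributions to the $M+1$ sum, after $A_1$ (resp. $D_2$) finally hits $\ket{0}_1$ (resp. $\ket{0}_2$) giving $a_1(\lambda_{M+1})$ (resp. $d_2(\lambda_{M+1})$). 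The key algebraic fact making this bookkeeping close is the standard symmetry/recombination identity: the $g$-terms generated when moving $A_1(\lambda_{M+1})$ past the $B_1$-string (which swap a spectral parameter $\lambda_{M+1}\leftrightarrow\lambda_{k_1}$) reorganize, upon summing over all subsets $\mathcal{J}$, into the same form as the wanted terms — this is precisely the computation underlying the symmetry of the Bethe vector in its arguments, and it is where the identity $f(\lambda,\mu)g(\mu,\nu) + g(\lambda,\mu) = \dots$ type relations among $f,g$ (valid for both \eqref{Rxxx} and \eqref{Rxxz}) are used.

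The main obstacle is organizing this recombination cleanly: naively, acting with $B(\lambda_{M+1})$ and fully commuting produces $2^{M+1}$-ish terms with nested $g$-factors, and one must show they collapse onto the clean $2^{M+1}$-term subset-sum form. I expect the cleanest route is to not commute $B(\lambda_{M+1})$ to a fixed position at all, but rather to observe that $\ket{\{\lambda\}} = \prod_{j=1}^{M}B(\lambda_j)\ket{0}$ is (by \eqref{TTcomm}, i.e. $[B(\lambda),B(\mu)]=0$) symmetric in $\lambda_1,\dots,\lambda_M$, and to prove instead the following sharper ordered statement: when the $B_1$'s and $B_2$'s on the right of \eqref{2comp} are written in the induced order (all arguments in $\mathcal{J}$ appearing in increasing order among the $B_1$'s, likewise for $\bar{\mathcal{J}}$ among the $B_2$'s), the coefficients are exactly $\prod_{k_1\in\mathcal{J},k_2\in\bar{\mathcal{J}},k_1<k_2} f(\lambda_{k_1},\lambda_{k_2})\prod_{k_1\in\mathcal{J},k_2\in\bar{\mathcal{J}},k_1>k_2} f(\lambda_{k_1},\lambda_{k_2})$ — but since $B_1$ and $B_2$ live in different components they commute, so ordering is immaterial and the product is just over all pairs $k_1\in\mathcal{J}$, $k_2\in\bar{\mathcal{J}}$, matching \eqref{2comp}. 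With this reformulation the induction is driven purely by \eqref{ABcomm} and \eqref{DBcomm} applied within a single component, the $g$-terms from \eqref{ABcomm} acting on $B_1(\lambda_{k_1})\cdots A_1(\cdot)\cdots\ket{0}_1$ get resummed using the component-$1$ analogue of the $M=1$ computation, and no cross-component recombination is ever needed. Alternatively, and perhaps most economically, one can simply cite \cite{Slav07} as the paper does, and present the above induction as the verification sketch.
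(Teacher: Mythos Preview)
The paper does not actually prove this proposition: it simply refers the reader to \cite{Slav07}. Your inductive scheme---factorise $B(\lambda)=A_1(\lambda)B_2(\lambda)+B_1(\lambda)D_2(\lambda)$, verify $M=1$, then push $A_1(\lambda_{M+1})$ through the $B_1$-string and $D_2(\lambda_{M+1})$ through the $B_2$-string using the multiple-action formulae derived from \eqref{ABcomm}, \eqref{DBcomm}---is exactly the standard argument one finds there, so at the level of strategy you are on the right track and in agreement with the cited proof.

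There is, however, a genuine error in your description of how the ``unwanted'' $g$-terms disappear. You write that the $g$-terms produced when $A_1(\lambda_{M+1})$ passes through the $B_1$-string ``reorganize, upon summing over all subsets $\mathcal{J}$, into the same form as the wanted terms''. This is not what happens. Look at the operator content of such a term: commuting $A_1(\lambda_{M+1})$ past $B_1(\lambda_m)$ with the $g$-piece of \eqref{ABcomm} produces $B_1(\lambda_{M+1})A_1(\lambda_m)$, so after the dust settles the vector carries $B_1(\lambda_{M+1})$ \emph{and} $B_2(\lambda_{M+1})$ simultaneously, while $\lambda_m$ appears in no $B$-operator at all (only in the scalar $a_1(\lambda_m)$). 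Such a vector cannot match any term on the right-hand side of \eqref{2comp} for $M+1$, where each $\lambda_j$ sits in exactly one $B$-operator. Hence these terms do not ``recombine into wanted terms''; they must \emph{cancel}. And indeed they cancel pairwise against the $g$-terms coming from the $B_1(\lambda_{M+1})D_2(\lambda_{M+1})$ branch: the unwanted term from the partition $\mathcal{J}\ni m$ (on the $A_1B_2$ side) cancels the unwanted term from the partition $\mathcal{J}\setminus\{m\}$ (on the $B_1D_2$ side), the mechanism being the oddness $g(\lambda,\mu)=-g(\mu,\lambda)$ together with a straightforward matching of the remaining $f$-, $a_1$-, $d_2$-factors. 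Once you make this cancellation explicit, the induction closes cleanly with only the two ``wanted'' contributions surviving, giving precisely the partitions $\mathcal{J}\cup\{M+1\}$ and $\mathcal{J}$ of $\{1,\dots,M+1\}$. Your proposed ``alternative route'' via reordering does not sidestep this issue; the same pairwise cancellation is the crux in any version of the argument.
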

For proof see \cite{Slav07}. The result of proposition \ref{prop:2comp} can be generalized for an arbitrary number of components $K\leq N$, as stated in \cite{IzKorRe87}. 

\begin{prop}\label{prop:Kcomp} Let us have a system of $K\leq N$ components. An arbitrary formal Bethe vector \eqref{BetheVec} of the full system can be expressed in terms of the formal Bethe vectors of its $K$ components as:
\begin{align}
	& \ket{\{\lambda \}} = 
 \sum_{\mathcal{P}_K(M)}  \prod_{k_1\in \mathcal{J}_1} \prod_{k_2\in \mathcal{J}_2} \dotsm \prod_{k_K\in \mathcal{J}_K} \prod_{1\leq i<j\leq K} \Bigl( a_i(\lambda_{k_j}) d_j(\lambda_{k_i}) f(\lambda_{k_i},\lambda_{k_j}) \Bigr) \nonumber\\
  & \qquad \times B_1(\lambda_{k_1}) B_2(\lambda_{k_2})  \dotsb B_K(\lambda_{k_K}) \ket{0}\label{Kcomp}
\end{align}
where the summation is performed over the set $\mathcal{P}_K(M)$ of all divisions of the set $\{1,\dots,M\}$ into its $K$  subsets $\mathcal{J}_1,\mathcal{J}_2,\dots, \mathcal{J}_{K}$ such that: $\mathcal{J}_1\cup \mathcal{J}_2\cup \dots \cup \mathcal{J}_{K}=\{1,\dots,M\}$ and $\mathcal{J}_j\cap \mathcal{J}_k=\emptyset$ for $j\neq k$.
\end{prop}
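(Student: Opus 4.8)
The plan is to prove Proposition~\ref{prop:Kcomp} by induction on the number of components $K$, using Proposition~\ref{prop:2comp} as the base case $K=2$. For the inductive step, I would regard the $(K+1)$-component monodromy matrix as a two-component one by grouping the first $K$ factors into a single block: write $T(\lambda)=T(\lambda|I)\,T(\lambda|K+1)$ with $T(\lambda|I)=T(\lambda|1)\cdots T(\lambda|K)$. The block $T(\lambda|I)$ acts on $\mathscr{H}_I=\mathscr{H}_1\otimes\cdots\otimes\mathscr{H}_K$, satisfies the same RTT-relation, annihilates the block pseudovacuum $\ket{0}_I=\ket{0}_1\otimes\cdots\otimes\ket{0}_K$ with $C_I$, and its diagonal eigenvalues on $\ket{0}_I$ are $a_I(\lambda)=\prod_{j=1}^K a_j(\lambda)$ and $d_I(\lambda)=\prod_{j=1}^K d_j(\lambda)$ by \eqref{PartEigenval}. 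Applying Proposition~\ref{prop:2comp} to this grouping expresses $\ket{\{\lambda\}}$ as a sum over $\mathcal{J}\in\mathcal{P}(M)$ of $\prod_{k_1\in\mathcal{J}}\prod_{k_2\in\bar{\mathcal{J}}} f(\lambda_{k_1},\lambda_{k_2})\, d_{K+1}(\lambda_{k_1})\, a_I(\lambda_{k_2})\; B_I(\lambda_{k_1})B_{K+1}(\lambda_{k_2})\ket{0}$, where now $B_I(\lambda_{k_1})\cdots\ket{0}_I$ is itself a formal Bethe vector of the $K$-component block model with rapidities $\{\lambda_{k_1}:k_1\in\mathcal{J}\}$.

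Next I would feed each such block Bethe vector into the inductive hypothesis: $\prod_{k_1\in\mathcal{J}}B_I(\lambda_{k_1})\ket{0}_I$ decomposes as a sum over all partitions of $\mathcal{J}$ into $\mathcal{J}_1,\dots,\mathcal{J}_K$, with the weight $\prod_{1\le i<j\le K}\prod_{k_i\in\mathcal{J}_i}\prod_{k_j\in\mathcal{J}_j} a_i(\lambda_{k_j})d_j(\lambda_{k_i})f(\lambda_{k_i},\lambda_{k_j})$ and operators $B_1(\lambda_{k_1})\cdots B_K(\lambda_{k_K})$. Setting $\mathcal{J}_{K+1}=\bar{\mathcal{J}}$ turns the outer sum over $\mathcal{J}$ together with the inner sum over partitions of $\mathcal{J}$ into a single sum over all partitions of $\{1,\dots,M\}$ into $K+1$ blocks $\mathcal{J}_1,\dots,\mathcal{J}_{K+1}$, i.e.\ over $\mathcal{P}_{K+1}(M)$. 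The operator string becomes $B_1(\lambda_{k_1})\cdots B_K(\lambda_{k_K})B_{K+1}(\lambda_{k_{K+1}})\ket{0}$, as desired.

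The remaining work is the bookkeeping of the coefficients, which is the main step to get right. The factor $d_{K+1}(\lambda_{k_1})$ for each $k_1\in\mathcal{J}=\mathcal{J}_1\cup\cdots\cup\mathcal{J}_K$ supplies exactly the $d_{K+1}(\lambda_{k_i})$ pieces of the pairs $(i,K+1)$ for $i=1,\dots,K$; the factor $a_I(\lambda_{k_2})=\prod_{i=1}^K a_i(\lambda_{k_2})$ for each $k_2\in\mathcal{J}_{K+1}$ supplies the $a_i(\lambda_{k_{K+1}})$ pieces of those same pairs; and $\prod f(\lambda_{k_1},\lambda_{k_2})$ over $k_1\in\mathcal{J}$, $k_2\in\mathcal{J}_{K+1}$ supplies $f(\lambda_{k_i},\lambda_{k_{K+1}})$ for all $i\le K$. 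Combined with the pairs $1\le i<j\le K$ coming from the inductive hypothesis, this reproduces precisely $\prod_{1\le i<j\le K+1}\bigl(a_i(\lambda_{k_j})d_j(\lambda_{k_i})f(\lambda_{k_i},\lambda_{k_j})\bigr)$ in \eqref{Kcomp}.

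I expect the main obstacle to be purely organizational rather than conceptual: one must be careful that in Proposition~\ref{prop:2comp} the roles of the two components are not symmetric (the first block contributes $d$ of the second and $a$ of the first, with arguments split accordingly), so the grouping must be chosen consistently — block $I$ as the ``first'' component and site $K+1$ as the ``second'' — and the index ranges in the nested products must be tracked so that each pair $(i,j)$ with $i<j$ is counted once and with the correct arguments. A minor point worth noting explicitly is that the decomposition is an identity of \emph{formal} Bethe vectors, so no Bethe equations or distinctness of rapidities beyond $\lambda_j\ne\lambda_k$ is needed, and the commutativity of matrix elements of distinct partial monodromy matrices (stated in the text) is what allows the operators $B_1,\dots,B_{K+1}$ to be written in the fixed order shown.
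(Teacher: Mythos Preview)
Your proposal is correct and follows essentially the same approach as the paper: induction on the number of components with Proposition~\ref{prop:2comp} supplying the two-component splitting at each step. The only cosmetic difference is that you pass from $K$ to $K{+}1$ by grouping the first $K$ components into a single block and peeling off the last, whereas the paper passes from $K{-}1$ to $K$ by splitting the last subchain in two; the coefficient bookkeeping you outline is exactly what is needed and matches the paper's.
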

The proof is missing in the literature. We give it here.
\begin{proof}
We perform the proof by induction on the number of components $K$. For $K=2$, eq. \eqref{Kcomp}  coincides with \eqref{2comp}. 

Let us suppose that \eqref{Kcomp} is valid for $K-1<N$. The chain of the length $N$ is divided into $K-1$ subchains. Consequently, the Hilbert space is divided into $K-1$ subspaces $\mathscr{H}=\mathscr{H}_1\otimes\cdots \otimes \mathscr{H}_{K-1}$. 

Let us mention that we do not specify the explicit form of the division into $K-1$ subchains. We suppose that the induction hypothesis holds for all possible divisions into $K-1$ subchains. It is obvious that all possible divisions into $K$ subchains are obtained by dividing the last subchain  of the divisions into $K-1$ subchains, if possible, into its two subchains. 

To the division of the $(K-1)$-st subchain into its two subchains there corresponds a division of the Hilbert space $\mathscr{H}_{K-1}=\mathscr{H}'_{K-1}\otimes \mathscr{H}'_K$  into its two subspaces $\mathscr{H}'_{K-1}$ and $\mathscr{H}'_{K}$. Consequently, the  pseudovacuum $\ket{0}_{K-1}=\ket{0}'_{K-1}\otimes \ket{0}'_{K}$ is divided into two pseudovacua $\ket{0}'_{K-1}$ and $\ket{0}'_{K}$. The igenvalues of $A_{K-1}(\lambda)$, resp., $D_{K-1}(\lambda)$  are divided into parts as in \eqref{PartEigenval}: $a_{K-1}(\lambda)=a'_{K-1}(\lambda) a'_K(\lambda)$, resp., $d_{K-1}(\lambda)=d'_{K-1}(\lambda) d'_K(\lambda)$. Using this and proposition \ref{prop:2comp}, we get
\begin{align}
	& \prod_{k_{K-1}\in \mathcal{J}_{K-1}} B_{K-1}(\lambda_{k_{K-1}}) \ket{0}_{K-1} \notag\\
	& =   \sum \prod_{j_{K-1}\in \mathcal{J}'_{K-1}} \prod_{j_{K}\in \mathcal{J}'_{K}} f(\lambda_{j_{K-1}},\lambda_{j_K}) d'_{K}(\lambda_{j_{K-1}}) a'_{K-1}(\lambda_{j_K})  \nonumber\\
	&\qquad \times B'_{K-1}(\lambda_{j_{K-1}})  \ket{0}'_{K-1}\otimes B'_{K}(\lambda_{j_K}) \ket{0}'_{K} \label{Ncomp1}
\end{align}
where the sum goes over all divisions of $\mathcal{J}_{K-1}$ into its two disjoint subsets $\mathcal{J}'_{K-1}$ and $\mathcal{J}_{K}'$ such that $\mathcal{J}'_{K-1} \cup \mathcal{J}_{K}'=\mathcal{J}_{K-1}$. The operators $B'_{K-1}(\lambda)$ and $B'_{K}(\lambda)$ act on new subchains with Hilbert spaces $\mathscr{H}'_{K-1}$ and $\mathscr{H}'_K$, respectively. Using this in the induction hypothesis for $K-1$ we prove  \eqref{Kcomp}. 
\end{proof}

\subsection{Local structure of formal Bethe vectors}
\label{ssec:LocalBV}

We investigate in this subsection the local structure of the formal Bethe vectors. The approach  developed in this subsection is the application of  proposition \ref{prop:Kcomp} for subchains of the length one (1-subchains). We divide the chain of the length $N$ into its $N$ 1-subchains. The local Hilbert space corresponding to the $j$-th 1-subchains is $h_j$. The monodromy matrix $T_j(\lambda)$  is  identical with the L-operator $L_j(\lambda, \xi_j)$, as can be seen from \eqref{monodromy}. 

Let the local quantum space $h_j$ contain a pseudovacuum $\ket{0}_j$. Let $\ket{0}_j$  be a common eigenvector of the diagonal elements of $L_j(\lambda,\xi_j)$ and be annihilated by the subdiagonal element: 
\begin{align}
& (L_j(\lambda,\xi_j))_{11} \ket{0}_j = \alpha(\lambda,\xi_j) \ket{0}_j, \quad (L_j(\lambda,\xi_j))_{22} \ket{0}_j = \delta(\lambda,\xi_j) \ket{0}_j, \notag \\ 
& (L_j(\lambda,\xi_j))_{21} \ket{0}_j = 0. \label{assumptions2}
\end{align}
Assumptions \eqref{assumptions2} for the L-operators $L_j(\lambda,\xi_j)$ ensure the validity of assumptions \eqref{PartOperators} of multi--component models for arbitrary $K$, $K\leq N$. Consequently, the validity of propositions \ref{prop:2comp} and \ref{prop:Kcomp}. Particularly, \eqref{assumptions2} ensures the validity of assumptions \eqref{assumptions1} necessary for the diagonalisation procedure. 

The complete pseudovacuum $\ket{0}\in \mathscr{H}$ is of the  form of an $N$-fold tensor product 
\begin{equation} \label{pseudovacuum}
\ket{0}= \ket{0}_1  \otimes \ket{0}_2 \otimes \cdots \otimes \ket{0}_N
\end{equation}
and the eigenvalues of $A(\lambda)$ and $D(\lambda)$ in \eqref{assumptions1} are:
\begin{equation}
a(\lambda) = \prod_{j=1}^N \alpha(\lambda,\xi_j), \quad d(\lambda) = \prod_{j=1}^N \delta(\lambda,\xi_j).
\end{equation}
The creation-like operator $B_j(\lambda)$ of the $j$-th 1-subchain is identical with the overdiagonal element of $L_j(\lambda,\xi)$
\begin{equation} \label{B1-subchain}
B_j(\lambda) = (L_j(\lambda,\xi_j))_{12}.
\end{equation}
We restrict our consideration to a very specific representation in which 
\begin{equation} \label{assumption3}
B_j(\lambda) B_j(\mu) \ket{0} = 0.
\end{equation}
This condition is satisfied for both the XXX and XXZ spin chains, see \cite{Fad96} (see also \cite{BFIKN15}), and indicates the fermionic type behavior of the operators $B_j(\lambda)$.

\begin{prop}\label{prop:DetailBV}
Let assumptions \eqref{assumptions2} be satisfied. Let creation-like operators \eqref{B1-subchain} satisfy condition \eqref{assumption3}. Then formal Bethe vector \eqref{BetheVec} can be represented in the form
\begin{align}
  \ket{\{\lambda\}} & = \sum_{1\leq n_1 < n_2 <\dots< n_M\leq N} \, \sum_{\sigma_\lambda \in S_M} \sigma_\lambda \Biggl(  B_{n_1}(\lambda_{1}) B_{n_2}(\lambda_{2}) \dotsb B_{n_M}(\lambda_{M})\ket{0}   \nonumber\\
	& \times \prod_{l=1}^M  \prod_{i=1}^{n_l-1}\alpha(\lambda_l,\xi_i) \prod_{j=n_l+1}^N \delta(\lambda_l,\xi_j) \prod_{r=l+1}^{M} f(\lambda_l,\lambda_r)  \Biggr) \label{Bethe1}
\end{align}
where $S_M$ is the permutation group and its elements $\sigma_\lambda$ permute the spectral parameters $\{\lambda_1,\dots,\lambda_M\}$.
\end{prop}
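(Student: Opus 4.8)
The plan is to specialize Proposition \ref{prop:Kcomp} to the maximal decomposition $K=N$ into $1$-subchains and then simplify. First I would set $K=N$ in \eqref{Kcomp}. The sum over $\mathcal{P}_N(M)$ — divisions of $\{1,\dots,M\}$ into $N$ subsets $\mathcal{J}_1,\dots,\mathcal{J}_N$ — is then reorganized using condition \eqref{assumption3}: since $B_j(\lambda)B_j(\mu)\ket{0}=0$, any term with $|\mathcal{J}_j|\ge 2$ for some $j$ vanishes, because the product $B_1(\lambda_{k_1})\cdots B_N(\lambda_{k_N})\ket{0}$ in that term would contain two factors $B_j$ acting on $\ket{0}_j$ (recalling that operators on distinct $1$-subchains commute and each $B_j$ acts nontrivially only on $h_j$). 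Hence only divisions with every $|\mathcal{J}_j|\in\{0,1\}$ survive. Such a division is the same datum as a choice of $M$ distinct sites $n_1<\dots<n_M$ (the indices $j$ with $\mathcal{J}_j\neq\emptyset$) together with a bijection assigning to each occupied site one of the parameters $\lambda_1,\dots,\lambda_M$ — i.e. a permutation $\sigma_\lambda\in S_M$. This produces the double sum $\sum_{n_1<\dots<n_M}\sum_{\sigma_\lambda\in S_M}$ of \eqref{Bethe1}.

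Next I would translate the coefficient. On a $1$-subchain, $T(\lambda|j)=L_j(\lambda,\xi_j)$, so by \eqref{assumptions2} we have $a_j(\lambda)=\alpha(\lambda,\xi_j)$, $d_j(\lambda)=\delta(\lambda,\xi_j)$, and $B_j(\lambda)=(L_j(\lambda,\xi_j))_{12}$. In a surviving term, fix the site-and-parameter assignment: say parameter $\lambda_l$ sits at site $n_l$, for $l=1,\dots,M$ (after applying $\sigma_\lambda$). The coefficient in \eqref{Kcomp} is $\prod_{1\le i<j\le N}\bigl(a_i(\lambda_{k_j})\,d_j(\lambda_{k_i})\,f(\lambda_{k_i},\lambda_{k_j})\bigr)$, where the product ranges over all ordered pairs of occupied sites $i<j$. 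I would split this into the $\alpha$-part, the $\delta$-part, and the $f$-part. For a fixed occupied site carrying parameter $\lambda_l$: the factor $a_i(\cdot)=\alpha(\cdot,\xi_i)$ appears with argument $\lambda_l$ for every occupied site $i$ strictly to its left, and the factor $d_j(\cdot)=\delta(\cdot,\xi_j)$ appears with argument $\lambda_l$ for every occupied site $j$ strictly to its right. Here is the one genuine subtlety: in \eqref{Kcomp} only occupied sites contribute, whereas \eqref{Bethe1} has $\prod_{i=1}^{n_l-1}\alpha(\lambda_l,\xi_i)$ and $\prod_{j=n_l+1}^{N}\delta(\lambda_l,\xi_j)$ ranging over all sites. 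The resolution is that on an empty $1$-subchain $j$ the corresponding $B_j$ does not appear, $\ket{0}_j$ is untouched, and $L_j(\lambda,\xi_j)\ket{0}_j$ contributes exactly the scalars $\alpha(\lambda,\xi_j)$ (from the relevant diagonal entry) — so extending the products to empty sites is consistent with how the unused parts of the monodromy matrix act; I would phrase this by noting that in the $N$-component decomposition every site, occupied or not, already contributes its $\alpha$ or $\delta$ eigenvalue exactly once per parameter, matching \eqref{PartEigenval} applied iteratively. Equivalently, and more cleanly, I would recover \eqref{Bethe1} by a direct induction mirroring the proof of Proposition \ref{prop:Kcomp}: split off one $1$-subchain at a time using Proposition \ref{prop:2comp}, at each step picking up the full ladders $\prod_{i=1}^{n_l-1}\alpha$ and $\prod_{j=n_l+1}^{N}\delta$ because the $1$-subchain being split off has trivial $a'$ or $d'$ except at its single site.

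Finally, the $f$-part: in \eqref{Kcomp} the factor $f(\lambda_{k_i},\lambda_{k_j})$ appears once for each ordered pair of occupied sites $i<j$, i.e. once for each unordered pair $\{\lambda_l,\lambda_r\}$ of parameters, with the argument order dictated by which parameter sits at the leftmost of the two sites. After we pull the site ordering out via the permutation $\sigma_\lambda$ (so that, inside $\sigma_\lambda(\cdots)$, parameter index $l$ is at the $l$-th occupied site), this becomes exactly $\prod_{l=1}^{M}\prod_{r=l+1}^{M}f(\lambda_l,\lambda_r)$, which is what appears in \eqref{Bethe1}; the action of $\sigma_\lambda$ on everything inside the large parentheses then restores the sum over all assignments. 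I would check the $K=2$, $N=2$ or $M=1$ cases by hand to pin down the orientation conventions in $f$. The main obstacle, as indicated above, is purely bookkeeping: making the identification between a division in $\mathcal{P}_N(M)$ with all parts of size $\le 1$ and a pair (increasing site tuple, permutation), and tracking the argument order in $f$ and the left/right split of $\alpha$ versus $\delta$ through that identification — there is no new analytic or algebraic input beyond \eqref{assumption3} and Propositions \ref{prop:2comp}–\ref{prop:Kcomp}.
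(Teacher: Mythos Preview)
Your approach is exactly the paper's: specialize Proposition~\ref{prop:Kcomp} to $K=N$, use \eqref{assumption3} to kill every partition with some $|\mathcal{J}_j|\ge 2$, and reparametrize the surviving sum as $\sum_{n_1<\dots<n_M}\sum_{\sigma_\lambda\in S_M}$. The one place you diverge is your ``genuine subtlety'', and it is not one: in \eqref{Kcomp} the inner product $\prod_{1\le i<j\le K}$ runs over \emph{all} pairs of component indices, and the factor $a_i(\lambda_{k_j})$ depends only on $i$ and on $k_j\in\mathcal{J}_j$, not on any element of $\mathcal{J}_i$. Hence for each occupied site $j=n_l$ the $a$-contribution is already $\prod_{i=1}^{n_l-1}a_i(\lambda_l)=\prod_{i=1}^{n_l-1}\alpha(\lambda_l,\xi_i)$, with $i$ ranging over all sites to the left, empty or not; the same happens for $d_j$. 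This is precisely how the paper reads off $\mathscr{C}_a$, $\mathscr{C}_d$, $\mathscr{C}_f$ in one line each. Your inductive workaround (peeling off one $1$-subchain at a time via Proposition~\ref{prop:2comp}) would also succeed, but it is unnecessary once the coefficient in \eqref{Kcomp} is parsed correctly.
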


\begin{proof}
We realised above that assumptions \eqref{assumptions2} ensure the validity of proposition \ref{prop:Kcomp} for $K=N$.  Condition \eqref{assumption3} provides that the sets $\mathcal{J}_k$, $k=1,\dots,N$, from proposition \ref{prop:Kcomp} for $K=N$ are of the cardinality maximally one: $|\mathcal{J}_k|\leq 1$. Moreover, only $M$ of these sets are nonempty, let us denote them $\mathcal{J}_{n_1},\mathcal{J}_{n_2},\dots,\mathcal{J}_{n_M}$. The summation over all distributions $\mathcal{P}_N(M)$ in \eqref{Bethe1} is then equivalent to the summation over all $n_1,n_2,\dots,n_M$ such that $1\leq n_1 < n_2 < \cdots < n_M \leq N$, and over all permutations $\sigma_\lambda\in S_M$ of the set of spectral-parameters $\{\lambda_1,\dots,\lambda_M\}$  provided that $k\in \mathcal{J}_{n_k}$. This observation drastically simplifies the subsequent considerations.

Let us study what happens with the coefficient
\begin{equation}
\mathscr{C}_a = \prod_{k_1\in\mathcal{J}_1} \cdots \prod_{k_N\in\mathcal{J}_N } \prod_{1\leq i<j\leq N}  a_i(\lambda_{k_j})
\end{equation}
appearing in \eqref{Kcomp} for $K=N$. We know with respect to the above considerations that only $\mathcal{J}_{n_1},\mathcal{J}_{n_2},\dots,\mathcal{J}_{n_M}$ from $\mathcal{J}_{1},\mathcal{J}_{2},\dots,\mathcal{J}_{N}$ are nonempty and, moreover, contain only one element. Therefore,
\begin{equation}
\mathscr{C}_a = \prod_{l=1}^M \prod_{i=1}^{n_l-1} a_i(\lambda_{k_{n_l}}).
\end{equation}
Moreover, we have supposed that $k_{n_l}=l$ and $a_i(\lambda)=\alpha(\lambda,\xi_i)$. Hence,
\begin{equation}
\mathscr{C}_a = \prod_{l=1}^M \prod_{i=1}^{n_l-1} \alpha(\lambda_l,\xi_i).
\end{equation}
Similarly, we obtain that
\begin{align}
\mathscr{C}_d & = \prod_{k_1\in\mathcal{J}_1} \cdots \prod_{k_N\in\mathcal{J}_N } \prod_{1\leq i<j\leq N}  d_j(\lambda_{k_i}) = \prod_{l=1}^M \prod_{j=n_l+1}^{N} \delta(\lambda_l,\xi_j), \\
\mathscr{C}_f & = \prod_{k_1\in\mathcal{J}_1} \cdots \prod_{k_N\in\mathcal{J}_N } \prod_{1\leq i<j\leq N} f(\lambda_{k_i},\lambda_{k_j}) = \prod_{l=1}^M \prod_{r=l+1}^{M} f(\lambda_l,\lambda_r).
\end{align}
The product of the partial Bethe vectors is 
\begin{align}
\prod_{k_1\in\mathcal{J}_1} \cdots \prod_{k_N\in\mathcal{J}_N }  B_1(\lambda_{k_1})  B_2(\lambda_{k_2}) \cdots B_N(\lambda_{k_N}) = 
B_{n_1}(\lambda_{1})  B_{n_2}(\lambda_{2})  \cdots B_{n_M}(\lambda_M).
\end{align}
Gluing these results together we prove \eqref{Bethe1}.
\end{proof}

For models, where $B_j(\lambda)$ are parameter independent, expression \eqref{Bethe1} can be further simplified. This holds particularly for the XXX and XXZ spin chains, cf. \cite{Fad96} or \cite{BFIKN15}.  For homogeneous spin chains we obtain the following representation of the formal Bethe vectors:
\begin{align} \label{BetheVecDetail}
\ket{\{\lambda\}} & = \prod_{l=1}^M \frac{\delta^N(\lambda_l,\xi)}{\alpha(\lambda_l,\xi)}  \sum_{1\leq n_1<\cdots<n_M\leq N}  B_{n_1}B_{n_2}\cdots B_{n_M} \ket{0} \\
&  \times \sum_{\sigma_\lambda \in S_M} \sigma_\lambda  \left(  \prod_{1\leq i<j\leq M} f(\lambda_{i},\lambda_{j}) \prod_{k=1}^M \left( \frac{\alpha(\lambda_k,\xi)}{ \delta(\lambda_k,\xi)} \right)^{n_k}  \right). \notag
\end{align} 
This holds for the homogeneous XXX and XXZ spin chains.

The representation of the formal Bethe vectors \eqref{BetheVecDetail} was published in \cite{IzKorRe87} for the homogeneous XXX spin chain. The special form of proposition \ref{prop:DetailBV} for the inhomogeneous XXX spin chain is given in \cite{EFGKK05}. Proposition \ref{prop:DetailBV} provides the generalisation of the known results for the generalised inhomogeneous models associated with the R-matrix of both the XXX- and XXZ-type.

\section{Conclusions}

The results of section \ref{sec:BetheVectors} are valid for the generalised inhomogeneous models associated with the R-matrix of the XXX- and XXZ-type. 

The representation \eqref{BetheVecDetail} proves the equivalence of the QISM to the coordinate Bethe ansatz for the homogeneous XXX and XXZ spin chains, as was noted  in \cite{IzKorRe87} for the XXX spin chain (see also \cite{BFIKN15}).

Formula \eqref{Bethe1} is the most general formula giving an explicit expression for the formal Bethe vectors of all models satisfying condition \eqref{assumption3}, like the inhomogeneous XXX spin chain, the inhomogeneous XXZ chain, or the fermionic realisation of the homogeneous  XXX and XXZ spin chains (see \cite{BFIKN15,BFI14} and references therein).

\section*{Acknowledgement}

The author would like to express his gratitude to  his scientific leaders A. Isaev and \v{C}. Burd\'{i}k for their guidance in his research and to N. Slavnov for the  advice to study two--component models in the context of the Bethe vectors. 

The work of the author was supported by the grant of the CTU in Prague $SGS15/215/OHK4/3T/14$ and by the Grant of the Plenipotentiary of the Czech Republic at JINR, Dubna.


\phantomsection

\addcontentsline{toc}{section}{References}

\bibliographystyle{abbrv}
\bibliography{literatura}

\end{document}